\newcommand{\Real}{{\mathbb R}}
\newcommand{\TwoEC}{\emph{2EC}}
\newcommand{\TwoECLP}{\ensuremath{\text{\TwoEC}^{\text{LP}}}}
\newcommand{\alphaTwoEC}{\ensuremath{\alpha\text{\TwoEC}}}
\newcommand{\Kn}{\ensuremath{K_{n}}}
\newcommand{\Gx}{\ensuremath{G_{x^{*}}}}
\newtheorem{conjecture}{Conjecture}
\newtheorem{case}{Case}
\title{Toward a 6/5 Bound for the Minimum Cost 2-Edge Connected Spanning Subgraph Problem \thanks{This research was partially supported by grants from the Natural Sciences and \mbox{Engineering} Research Council of Canada}}
\author{Sylvia Boyd \thanks{Email: \href{mailto:sylvia@site.uottawa.ca}   
   {\texttt{sylvia@site.uottawa.ca}}}
\and Philippe Legault \thanks{Email: \href{mailto:philippe@legault.cc}  
   {\texttt{philippe@legault.cc}}}}
\begin{document}

\maketitle

\begin{abstract}
Given a complete graph $\Kn=(V, E)$ with non-negative edge costs $c\in \Real^{E}$, the problem \TwoEC{} is that of finding a 2-edge connected spanning multi-subgraph of \Kn{} of minimum cost. The integrality gap \alphaTwoEC{} of the linear programming relaxation \TwoECLP{} for \TwoEC{} has been conjectured to be $\frac{6}{5}$, although currently we only know that $\frac{6}{5}\leq\alphaTwoEC\leq\frac{3}{2}$. In this paper, we explore the idea of using the structure of solutions for \TwoECLP{} and the concept of convex combination to obtain improved bounds for \alphaTwoEC. We focus our efforts on a family $J$ of half-integer solutions that appear to give the largest integrality gap for \TwoECLP. We successfully show that the conjecture $\alphaTwoEC = \frac{6}{5}$ is true for any cost functions optimized by some $x^{*}\in J$.
\end{abstract}

\begin{keywords}
minimum cost 2-edge connected subgraph problem, approximation algorithm, integrality gap.
\end{keywords}

\section{Introduction}
The \emph{2-edge connected subgraph problem} (\TwoEC{}) is that of finding a minimum cost 2-edge connected spanning multi-subgraph of the complete graph \sloppy{$\Kn=(V, E)$} with costs $c\in \Real^{E}_{\geq 0}$. This problem has many important applications in network design. It is known to be NP-hard even for very special cases\,\cite{csaba}. Currently, a $\frac{3}{2}$-approximation algorithm is known for \TwoEC{}. This follows from the fact that for any instance of \TwoEC{}, we can assume, WLOG, that the costs are metric and the solutions do not include multi-edges\,\cite{alexander}, in which case we can apply the $\frac{3}{2}$-approximation due to Frederickson and Ja'Ja'\,\cite{frederickson}. For \TwoEC{} where multi-edges are not allowed, a 2-approximation is known \cite{jain}.

For $e\in E$, letting $x_{e}$ represent the number of copies of $e$ in the \TwoEC{} solution, \TwoEC{} can be formulated as an integer linear program (ILP) as follows, i.e.:\begin{equation}\label{ilpTwoEC}
     \begin{aligned}
      \text{Minimize}\quad & cx&  \\
      \text{Subject to}\quad &\sum(x_{ij}:i\in S, j\notin S) \geq 2 &\text{for all } \emptyset \subset S \subset V,\\
       & x_e \ge 0\text{, and integer} &\text{for all } e \in E. \\
     \end{aligned}
\end{equation}
\noindent The linear programming (LP) relaxation of \TwoEC{}, denoted by \TwoECLP, is obtained by relaxing the integer requirement in (\ref{ilpTwoEC}). We use OPT(\TwoEC{}) (resp. OPT(\TwoECLP)) to denote the optimal value of \TwoEC{} (resp. \TwoECLP). Also, given any feasible solution $x^{*}$ for \TwoECLP, its \emph{support graph} \Gx{} is defined to be the subgraph of \Kn{} obtained by taking all edges $e\in E$ for which $x^{*}_{e}>0$.

We are interested in the \emph{integrality gap} \alphaTwoEC{} for \TwoECLP, which is the worst case ratio between OPT(\TwoEC) and OPT(\TwoECLP), i.e. \[\alphaTwoEC=\max_{\substack{c\geq 0\\c\neq 0}} \frac{\text{OPT(}\TwoEC\text{)}}{\text{OPT(}\TwoECLP\text{)}}\text{.}\] This gives a measure of the quality of the lower bound provided by \TwoECLP. Moreover, a polynomial-time constructive proof of $\alphaTwoEC=k$ would provide a $k$-approximation algorithm for \alphaTwoEC.

Even though \TwoEC{} has been intensively studied, little is known about \alphaTwoEC, except that ${\frac{6}{5}\le\alphaTwoEC\le\frac{3}{2}}$\,\cite{alexander} in general, and $\frac{8}{7}\leq\alphaTwoEC\leq\frac{4}{3}$ for the unweighted form of the problem in which one is given a graph and all edge costs are 1 (see\,\cite{boyd} and \cite{sebo}). In\,\cite{carr}, Carr and Ravi study \alphaTwoEC{}, and conjecture that $\alphaTwoEC{} = \frac{4}{3}$, however no examples are known for which the integrality gap ratio comes close to $\frac{4}{3}$. In\,\cite{alexander}, Alexander, Boyd and Elliott-Magwood also study \alphaTwoEC{} and make the following stronger conjecture based on their findings:

\begin{conjecture}\cite{alexander}\label{conjecture}
The integrality gap \alphaTwoEC{} for \TwoECLP{} is $\frac{6}{5}$.
\end{conjecture}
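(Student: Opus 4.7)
The plan is to prove $\text{OPT}(\TwoEC) \le \tfrac{6}{5}\, \text{OPT}(\TwoECLP)$ for every nonnegative cost $c$. By standard LP arguments the supremum defining $\alphaTwoEC$ is attained at a vertex-optimum of some cost vector, so it suffices to treat extreme points $x^{*}$ of the polytope
\[
P = \Bigl\{ x \ge 0 : \textstyle\sum_{e \in \delta(S)} x_{e} \ge 2 \text{ for all } \emptyset \ne S \subsetneq V \Bigr\}.
\]
For each such $x^{*}$ the principal tool is \emph{convex-combination rounding}: exhibit a distribution $\sum_{i} \lambda_{i} \chi^{H_{i}} \le \tfrac{6}{5}\, x^{*}$ with $\sum_{i} \lambda_{i} = 1$ and each $H_{i}$ a 2-edge-connected spanning multi-subgraph of $\Kn$, whereupon the cheapest $H_{i}$ satisfies $c(H_{i}) \le \tfrac{6}{5}\, c \cdot x^{*}$.

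The next phase of the plan is to narrow the class of extreme points that must be handled directly. Edges with $x^{*}_{e} = 1$ can be contracted and recovered by induction, and by splitting-off together with parallel-edge handling one may further assume that $\Gx$ is 3-edge-connected with bounded maximum degree. The evidence gathered in \cite{alexander} indicates that the extremal examples are \emph{half-integer}, with $x^{*}_{e} \in \{0, \tfrac{1}{2}, 1\}$, and this paper isolates the distinguished subfamily $J$ of half-integer extreme points that conjecturally attain the worst ratio. So the plan reduces to constructing the convex combination on every half-integer extreme point, treating $J$ as the extremal case and arguing that any half-integer extreme point outside $J$ either admits a nontrivial proper edge cut that triggers the contraction induction, or decomposes as a convex combination of $J$-type solutions and thereby inherits the $\tfrac{6}{5}$ bound.

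For a half-integer $x^{*}$, the convex combination is built from the cycle decomposition of the half-edge support: each half-edge cycle contributes a weighted family of integer $2$-edge-connected augmentations, coupled with the $1$-edges of the skeleton so that the $\tfrac{6}{5}$ budget is respected edge-by-edge. The main obstacle is the \emph{global assembly} step: the local rounding choices made on distinct half-edge cycles must fuse into a single spanning 2-edge-connected subgraph, and the delicate coupling between cycles that share vertices with the $1$-edge skeleton is exactly what restricts the present paper's unconditional result to the family $J$. Closing that gap---either through a polyhedral reduction that expresses an arbitrary half-integer extreme point as a convex combination of $J$-type points plus cheap corrections, or through a direct matroid-union or random-spanning-tree construction of the desired distribution over 2-edge-connected subgraphs---is the decisive step needed to upgrade the conditional bound proved in this paper to a full resolution of Conjecture~\ref{conjecture}.
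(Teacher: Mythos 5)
The statement you are addressing is a conjecture, and the paper does not prove it; its actual result (Theorem \ref{mainThm} and the corollary following it) establishes the $\frac{6}{5}$ bound only for cost functions optimized at \emph{half-triangle} solutions, via an explicit convex-combination construction (Lemma \ref{thmThreeSeventh} produces a uniform $\frac{4}{5}$ convex combination on cubic 3-edge-connected graphs, and Theorem \ref{mainThm} lifts it through the expanded half-triangles). Your proposal likewise does not constitute a proof of the conjecture: it is a plan whose decisive steps are asserted rather than carried out, and you concede as much in your final sentence.

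Concretely, three reductions in your outline are genuine gaps. First, the restriction to half-integer extreme points is unjustified: it is not known that the extreme points of the cut LP for \TwoEC{} are half-integer, nor that the worst-case ratio is attained at half-integer solutions; the paper cites only computational evidence (exact values of \alphaTwoEC{} for $n\le 10$, and for half-integer solutions up to $n=14$). Second, the claim that every half-integer extreme point outside the family $J$ ``either admits a nontrivial proper edge cut that triggers the contraction induction, or decomposes as a convex combination of $J$-type solutions'' is not established anywhere and is itself an open problem; contracting 1-edges and splitting off do not in general produce a half-triangle solution, since the half-edges may form cycles longer than triangles, a case the paper's machinery does not handle. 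Third, the ``global assembly'' of local roundings on the half-edge cycles is precisely the content of the paper's two inductive arguments (the proper 3-edge-cut gluing in Lemma \ref{thmThreeSeventh} and the 2-edge-cut gluing in Theorem \ref{mainThm}), and you supply no substitute for it. What you have written is an accurate description of the paper's programme and of why the full conjecture remains open, but it is not a proof of the conjecture, nor even a complete proof of the special case the paper actually establishes.
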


To investigate \alphaTwoEC{} further, a natural next step is to study $\alphaTwoEC$ for some interesting class of cost functions. We investigate \alphaTwoEC{} for the set of cost functions optimized at a particular family of feasible solutions for \TwoECLP. A feasible solution $x^{*}$ for \TwoECLP{} is called a \emph{half-integer solution} if $x^{*}_{e}\in\{0, \frac{1}{2}, 1\}$ for all $x^{*}_e\in E$, and it is called \emph{degree-tight} if $\sum_{uv}(x^{*}_{uv}:u\in V)=2$ for all $v\in V$. Finally, a degree-tight half-integer solution is called a \emph{half-triangle solution} if the edges in the support graph \Gx{} corresponding to $x^{*}_{e}=\frac{1}{2}$ (called \emph{half-edges}) form disjoint 3-cycles (called \emph{half-triangles}) joined by paths of edges of value 1 (called \emph{\mbox{1-paths}}).

The half-triangle solutions are of interest for studies of \alphaTwoEC{} as there is \mbox{evidence} that $\frac{\text{OPT}(\TwoEC)}{\text{OPT}(\TwoECLP)}$ is greatest for cost functions optimized at such solutions (see\,\cite{alexander},\,\cite{carr}). For example, the largest such ratio known is asymptotically $\frac{6}{5}$\,\cite{alexander}, and comes from the infinite family of \TwoEC{} problems shown in Figure \ref{65WorseCaseExample}, where the numbers shown are the edge costs, edges $uv$ not shown have cost equal to the minimum cost $uv$ path, and the ``gadget'' pattern is repeated $k$ times. This family is optimized for \TwoECLP{} by the half-triangle solution $x^{*}$ shown in Figure \ref{65WorseCaseExample2}. Also, in a computational study which found \alphaTwoEC{} exactly for all \Kn{} up to $n=10$ and all half-integer solutions up to $n=14$, \alphaTwoEC{} was given by a half-triangle solution for all values of $n$\,\cite{alexander}.

\begin{figure}
	\centering
	\begin{subfigure}[t]{0.45\textwidth}
		\vskip 0pt
		\centering
		\includegraphics{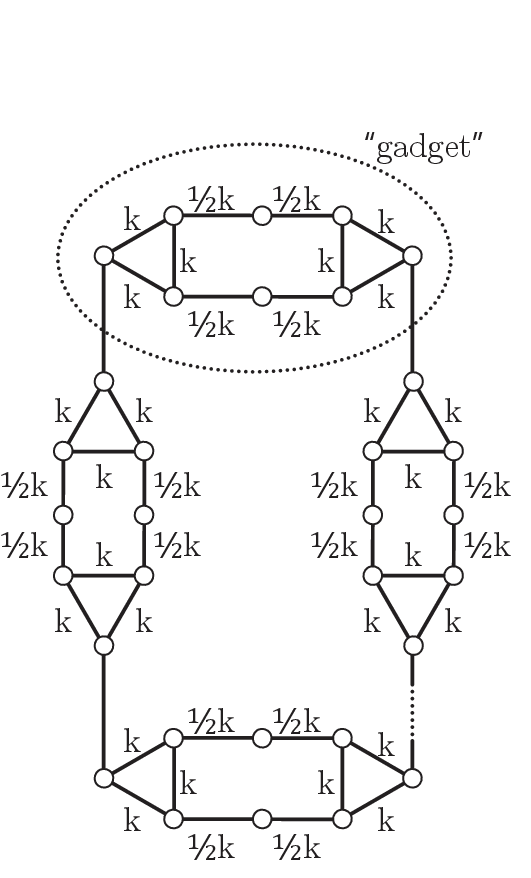}
		\caption{The edge costs.}\label{65WorseCaseExample}
	\end{subfigure}
	\hfill
	\begin{subfigure}[t]{0.45\textwidth}
		\vskip 0pt
		\centering
		\includegraphics{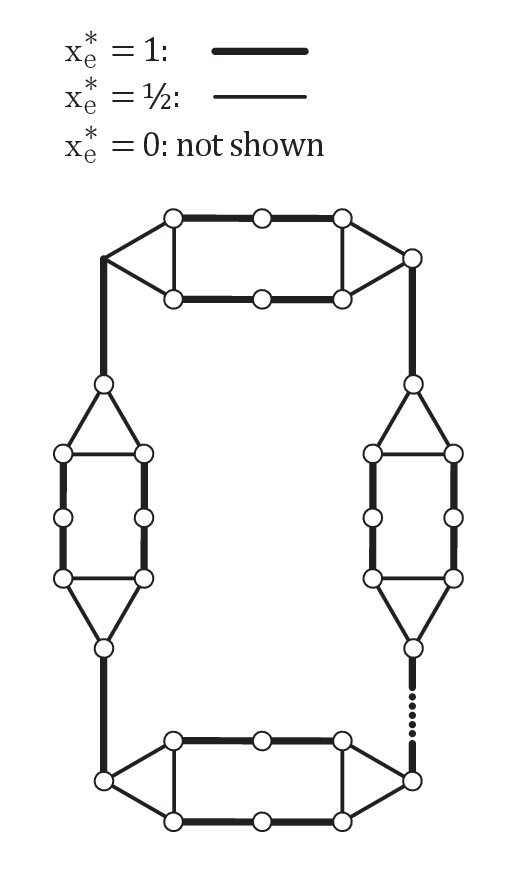}
		\caption{The half-triangle optimal solution $x^{*}$.}\label{65WorseCaseExample2}
	\end{subfigure}
	\caption{An example for which $\alphaTwoEC=\frac{6}{5}$\cite{alexander}.}
\end{figure}

The main result of this paper is to show that Conjecture\,\ref{conjecture} is true for any cost function optimized at half-triangles solutions. More specifically, we show that for any half-triangle solution $x^{*}$ and any cost function $c\ge 0$, there exists a solution of \TwoEC{} of cost at most $\frac{6}{5}cx^{*}$, which implies that $\alphaTwoEC=\frac{6}{5}$ for any cost function optimized at half-triangle solutions. Note that previously, $\frac{4}{3}$ was known, as Carr and Ravi\,\cite{carr} showed that for any degree-tight half-integer solution $x^{*}$ and any cost function $c\geq 0$, there exists a solution of \TwoEC{} of cost at most $\frac{4}{3}cx^{*}$.

A key idea used in our methods is that of convex combination. In the context of this paper, given a graph $G=(V, E)$, we say that a vector $y\in\Real^{E}$ is a \emph{convex combination} if there exist 2-edge connected spanning multi-subgraphs $H_{i}$ with multipliers $\lambda_{i}\in\Real_{\geq 0}, i=1, 2, \dotsc, j$ such that $y=\sum_{i=1}^{j}\lambda_{i}\chi^{E(H_{i})}$ and $\sum_{i=1}^{j}\lambda_{i} = 1$. Here $\chi^{E(H_{i})}\in\Real^{E}$ is the \emph{incidence vector} of subgraph $H_{i}$ (i.e. $\chi^{E(H_{i})}_{e}$ is the number of copies of edge $e$ in $H_{i}$). Our method is essentially an averaging argument, and can be described as follows: let $x^{*}$ be any feasible solution of \TwoECLP, and suppose we can show that $kx^{*}$ is greater than or equal to a convex combination for some value $k$ (in particular $k=\frac{6}{5}$). Then for any non-negative cost vector $c$ we have $kcx^{*}\ge \sum_{i=1}^{j} \lambda_{i} c\chi^{E(H_{i})}$. This implies that for at least one of the $H_{i}$, $c\chi^{E(H_{i})}\le kcx^{*}$. If $c$ is optimized at $x^{*}$ for \TwoECLP, we then have $cx^{*}=\text{OPT}(\TwoECLP)$, $\text{OPT}(\TwoEC)\leq c\chi^{E(H_{i})}$, and thus $\frac{\text{OPT}(\TwoEC)}{\text{OPT}(\TwoECLP)}\le k$ for $c$.

\section{Main Result}

Given a graph $G=(V, E)$, we sometimes use $E(G)$ to denote $E$, and $V(G)$ to denote $V$. A graph $G$ is called \emph{cubic} if every vertex of $G$ has degree three. A \emph{cut} in $G$ is a set of edges whose removal disconnects $G$ into two components, sometimes referred to as the \emph{shores} of the cut. We call a cut \emph{proper} if both shores have cardinality at least two. Given a vector $y$ that is a convex combination, the \emph{occurrence} of an edge $e$ in that convex combination is $y_{e}=\sum(\lambda_{i}:e\in H_{i})$. We sometimes refer to the occurrence of a pattern \emph{A} of edges in a convex combination, in which case we mean $\sum(\lambda_{i}:\text{pattern \emph{A} occurs in } H_{i})$, and we use the notation $\lambda_{A}$ to denote it.

In this section, we prove our main result which is that $\frac{6}{5}x^{*}$ can be expressed as a convex combination for any half-triangle solution $x^{*}$. We do this by first considering the cubic graph we get by shrinking all half-triangles to pseudo-vertices and replacing all 1-paths by singles edges. We obtain a convex combination result for this cubic graph, then show how we can use this result and certain patterns for the half-triangle edges to obtain the result that $\frac{6}{5}x^{*}$ is a convex combination.

\begin{definition}
$P(G) \Leftrightarrow $ Given a cubic 3-edge connected graph $G=(V, E)$, the vector $y^{*}\in\Real^{E}$ defined by $y^{*}_{e} = \frac{4}{5}$, for all $e \in E$, is a convex combination in which none of the 2-edge connected spanning subgraphs use more than one copy of any edge in $E$.
\end{definition}

\begin{lemma}\label{thmThreeSeventh}
$P(G)$ holds for all cubic 3-edge connected graphs $G=(V, E)$ with $|V| \geq 4$.
\end{lemma}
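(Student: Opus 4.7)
The plan is to prove $P(G)$ by strong induction on $|V|$. The base case $|V|=4$ is $K_{4}$, for which I would exhibit an explicit convex combination: weight $\frac{1}{15}$ on each of the three Hamilton cycles and weight $\frac{2}{15}$ on each of the six subgraphs of the form $K_{4}-e$. Since each edge of $K_{4}$ lies in exactly two Hamilton cycles and in five of the six $K_{4}-e$ subgraphs, the per-edge total is $\frac{2}{15}+\frac{10}{15}=\frac{4}{5}$, and the multipliers sum to $1$.

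For the inductive step, the first case is when $G$ has an essential $3$-edge cut $F=\{e_{1},e_{2},e_{3}\}$ whose shores $A,B$ both have size at least $2$. I would contract $B$ (resp.\ $A$) to a pseudo-vertex to form cubic 3-edge connected graphs $G_{A},G_{B}$ on fewer vertices, then apply the inductive hypothesis to each, obtaining decompositions $\sum_{i}\alpha_{i}\chi^{H_{i}^{A}}$ and $\sum_{j}\beta_{j}\chi^{H_{j}^{B}}$ of the corresponding $y^{*}$ vectors. Each $H_{i}^{A}$ must use either $2$ or all $3$ of the cut edges (the pseudo-vertex needs degree at least $2$ for 2-edge-connectivity), and the forced $\frac{4}{5}$ marginal on each cut edge, together with normalization, gives a small linear system in four unknowns whose unique solution assigns probability $\frac{1}{5}$ to each of the three 2-edge patterns and $\frac{2}{5}$ to the 3-edge pattern. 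Because the same distribution over cut-usage patterns is induced on both sides, I can couple the two decompositions by pairing subgraphs with matching cut-usage patterns, gluing them along the shared cut edges to form a 2EC spanning subgraph of $G$, and weighting each glued subgraph by the product of the relevant scaled multipliers.

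The harder case is when $G$ is cyclically 4-edge connected, so contraction along a 3-edge cut cannot reduce the problem. My plan here is to invoke Edmonds' perfect matching polytope theorem: since $\frac{1}{3}\chi^{E}(\delta(v))=1$ and $\frac{1}{3}\chi^{E}(\delta(S))\geq 1$ for every odd $S$ (using 3-edge connectivity), the vector $\frac{1}{3}\chi^{E}$ lies in the perfect matching polytope and decomposes as $\sum_{i}\mu_{i}\chi^{M_{i}}$ over perfect matchings $M_{i}$. The complementary 2-factors $E\setminus M_{i}$ are spanning 2-regular subgraphs, which I would repair into 2EC subgraphs $H_{i}$ by adding a minimal set of $M_{i}$-edges that bridges their cycle-components, and then blend the resulting combination with the full subgraph $G$ at appropriate weights to hit the target density $\frac{4}{5}$.

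The main obstacle I expect is this repair step in the cyclically 4-edge connected case. Controlling the added matching edges so that the final per-edge weight is exactly $\frac{4}{5}$ on every edge, and not merely $\frac{4}{5}$ on average, is the technical crux. A natural complementary formulation is to express $\frac{1}{5}\chi^{E}$ as a convex combination of ``deletable'' matchings $F_{i}$ (matchings with $G-F_{i}$ still 2EC, equivalently matchings containing at most one edge from each 3-cut) and take $\sum_{i}\mu_{i}\chi^{E\setminus F_{i}}$; verifying that $\frac{1}{5}\chi^{E}$ lies in the polytope cut out by the matching constraints together with these 3-cut inequalities, and that cyclic 4-edge connectivity makes the 3-cut constraints mild enough to be simultaneously satisfiable, is where the delicate combinatorial work must happen.
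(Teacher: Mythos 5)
Your base case for $K_{4}$ is correct (the weights $\frac{1}{15}$ on the three Hamilton cycles and $\frac{2}{15}$ on the six copies of $K_{4}-e$ do give per-edge mass $\frac{2}{15}+\frac{10}{15}=\frac{4}{5}$), and your treatment of the proper 3-edge-cut case is essentially the paper's: contract each shore, deduce from the $\frac{4}{5}$ marginals and normalization that the cut-usage distribution is forced to be $\frac{1}{5},\frac{1}{5},\frac{1}{5},\frac{2}{5}$ on both sides, and glue matching patterns. The problem is the remaining case, where $G$ has no proper 3-edge cut: there your proposal is not a proof but a research plan, and you say so yourself. The perfect-matching route gets you 2-factors of density $\frac{2}{3}$ per edge, but these 2-factors are in general disconnected, and the ``repair'' step of bridging their cycle components with matching edges is exactly where all the difficulty lives; you have no mechanism to control which edges get added and how often, so you cannot certify a per-edge occurrence of exactly $\frac{4}{5}$. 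Your fallback reformulation (write $\frac{1}{5}\chi^{E}$ as a convex combination of deletable matchings) has the same status: checking that $\frac{1}{5}\chi^{E}$ satisfies the matching inequalities and the cut inequalities $x(C)\leq |C|-2$ does not place it in the convex hull of deletable matchings, since that hull is not known to be described by those inequalities. Note also that even under cyclic 4-edge-connectivity the constraints do not reduce to matching constraints alone: a matching can contain all edges of a 4-edge cut whose edges happen to be pairwise non-adjacent, so higher cuts genuinely constrain deletability.

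For comparison, the paper's argument for this case is elementary and sidesteps matchings entirely: pick an arbitrary edge $uv$, delete $u$ and $v$, add the two reconnecting edges $ab$ and $cd$ to get a smaller cubic 3-edge-connected graph (here the absence of a proper 3-edge cut guarantees $a,b,c,d$ are distinct), apply induction, and lift each of the four presence/absence patterns of $ab,cd$ back to one or two subgraphs of $G$. This produces a convex combination $\mathbb{M}_{uv}$ in which $uv$ occurs $\frac{2}{5}$, its four neighbouring edges occur $\frac{9}{10}$, and every other edge occurs $\frac{4}{5}$. The key trick --- and it is precisely the answer to your worry about being ``$\frac{4}{5}$ on average but not on every edge'' --- is then to average the $|E|$ combinations $\mathbb{M}_{e}$ over all choices of $e$: each fixed edge is light in one of them, heavy in four, and exactly $\frac{4}{5}$ in the rest, and the deficits and surpluses cancel to give exactly $\frac{4}{5}$ everywhere. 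You should either carry out your deletable-matching program in full (which would be a genuinely different and harder proof) or replace that case with an argument of this local-surgery-plus-symmetrization type; as written, the lemma is not established.
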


\begin{proof}
Suppose the contrary, and let $G=(V, E)$ be the smallest counter-example for which $P(G)$ does not hold. Since $P(G)$ can be shown to be true directly for the unique graph $G$ with $|V| = 4$ (see Figure \ref{BaseCase}, where bold lines indicate edges in $H_{i}$ and dotted lines indicate edges omitted), we can assume $|V| > 4$.

\begin{figure}
\begin{center}
\includegraphics{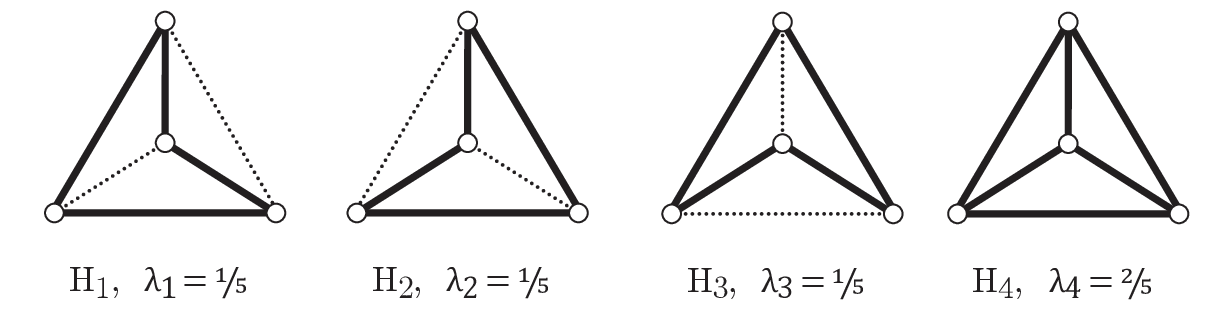}
\end{center}
\caption{Proof of Lemma \ref{thmThreeSeventh} for $G=(V, E)$, when $|V|=4$.}
\label{BaseCase}
\end{figure}
\vspace{1em}
\begin{case} \label{LemmaBaseCase}
$G$ has no proper 3-edge cut.
\end{case}
For any edge $uv\in E$, let the unlabeled adjacent vertices at $u$ be $a$ and $b$, and the unlabeled adjacent vertices at $v$ be $c$ and $d$. Since $G$ is 3-edge connected, has no proper 3-edge cut and $|V| > 4$, it follows that $a$, $b$, $c$ and $d$ are all distinct. This situation is illustrated on the left of Figure\,\ref{NoNonTrivial3EC}, where some incident edges are not shown for vertices $a$, $b$, $c$ and $d$. Removing $u$ and $v$ and their incident edges, and adding edges $ab$ and $cd$ yield a new cubic 3-edge connected graph $G'=(V', E')$ with fewer vertices than $G$. Therefore, $P(G')$ holds, so there exists a set of \mbox{2-edge} connected spanning subgraphs $H_{i}$ with multipliers $\lambda_{i}$, $i=1, 2, \dotsc, k$ such that $ab$ and $cd$ occur $\frac{4}{5}$ times overall in the convex combination. There are four patterns possible depending on the absence of $ab$ and $cd$ in $H_{i}$. These are indicated as patterns \emph{A}, \emph{B}, \emph{C} and \emph{D} in Figure\,\ref{PatternsToMissingEdge}, where an edge marked in bold indicates an edge which is in $H_{i}$, and a dotted edge indicates an edge which is not in $H_{i}$. For each pattern $Z$, we let $\lambda_{Z}$ represent the total occurrence of pattern $Z$ over all $H_{i}$ in the convex combination, i.e. $\lambda_{Z} = \sum(\lambda_{i}: \text{pattern }Z\text{ occurs in }H_{i})$.

\begin{figure}
\begin{center}
\includegraphics{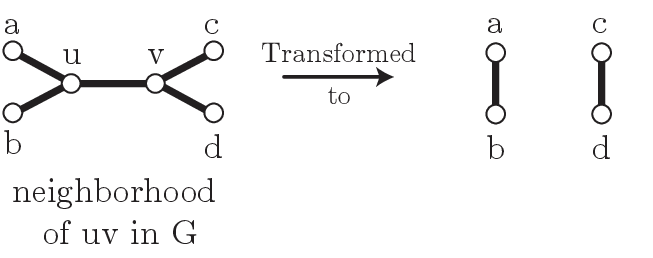}
\end{center}
\caption{Inductive step for Case \ref{LemmaBaseCase}.}
\label{NoNonTrivial3EC}
\end{figure}

Using the fact that $ab$ occurs exactly $\frac{4}{5}$ of the time, $cd$ occurs exactly $\frac{4}{5}$ of the time and $\lambda_{A} + \lambda_{B} + \lambda_{C} + \lambda_{D} = 1$, it follows that
\begin{equation} \label{lambdasEquiv}
\lambda_{A} + \lambda_{C} = \frac{4}{5} \text{, }
\lambda_{A} + \lambda_{B} = \frac{4}{5} \text{, }
\lambda_{B} + \lambda_{D} = \frac{1}{5} \text{ and }
\lambda_{C} + \lambda_{D} = \frac{1}{5} \text{.}
\end{equation}

To create a convex combination of subgraphs for $G$, we create one or two \mbox{2-edge} connected spanning subgraphs for each subgraph $H_{i}$ in the convex combination for $G'$, as shown in Figure \ref{PatternsToMissingEdge}. In the case we use two, we use multiplier $\frac{\lambda_{i}}{2}$ for each, otherwise we use multiplier $\lambda_{i}$. In Figure \ref{PatternsToMissingEdge} the resulting occurrences of the corresponding patterns in $G$ are indicated. Moreover, using (\ref{lambdasEquiv}) we have the occurrence of edges $au$ and $bu$ is $\frac{1}{2}(\lambda_{B} + \lambda_{D}) + \lambda_{A} + \lambda_{C} = \frac{9}{10}$, the occurrence of $vc$ and $vd$ is $\frac{1}{2}(\lambda_{C} + \lambda_{D}) + \lambda_{A} + \lambda_{B} = \frac{9}{10}$, and the occurrence of edge $uv$ is $\lambda_{B} +\lambda_{C}+\lambda_{D} = \frac{2}{5} - \lambda_{D}\le \frac{2}{5}$, and all the other edges occur $\frac{4}{5}$ of the time (illustrated on the right of Figure \ref{PatternsToMissingEdge}). For simplicity, we will always work with exact fractions: should the occurrence of $uv$ be less than $\frac{2}{5}$ of the time, we will add it back to arbitrary subgraphs so that it appears exactly $\frac{2}{5}$ overall.

\begin{figure}
\begin{center}
\includegraphics{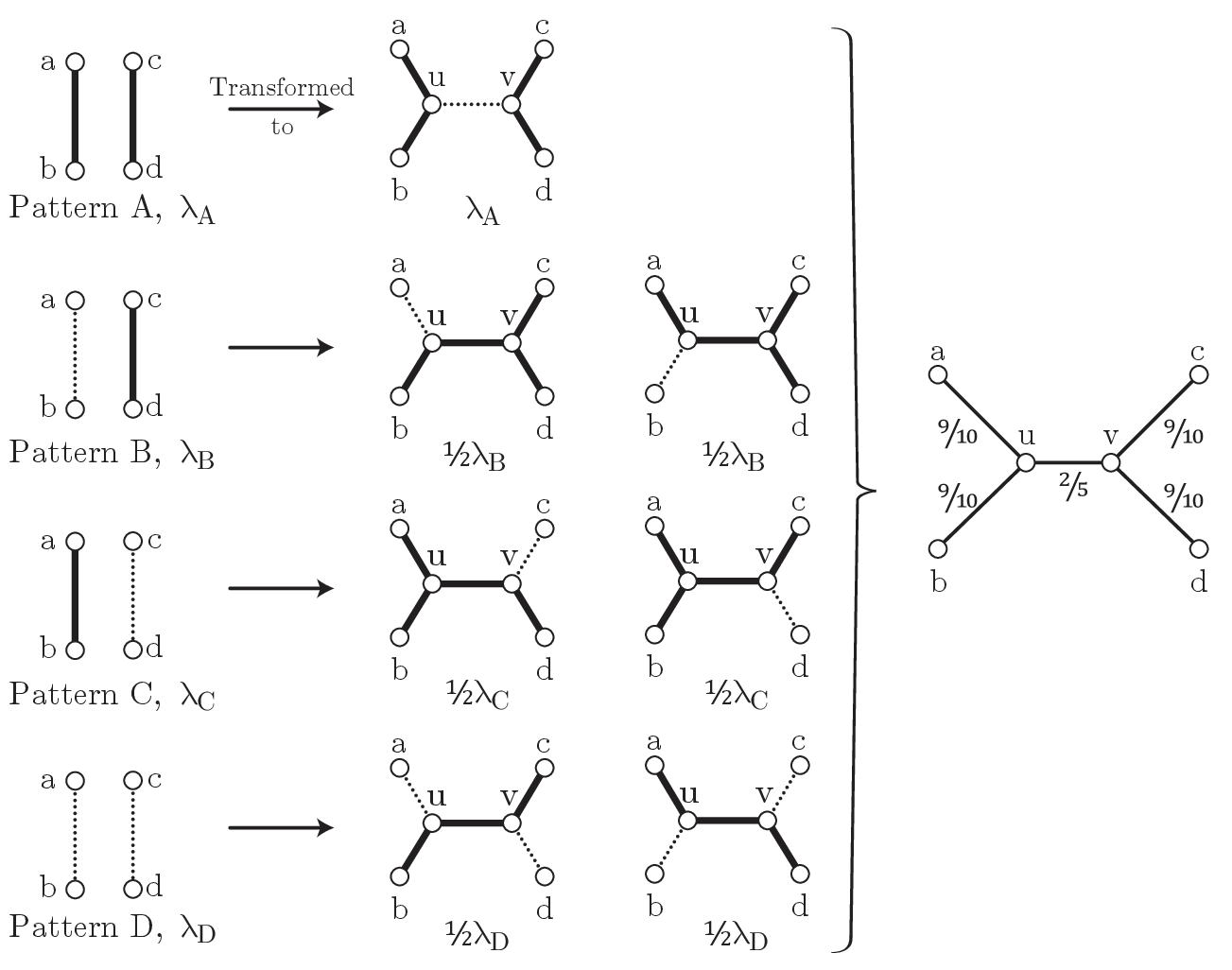}
\end{center}
\caption{Patterns for $ab$ and $cd$, and their transformations.}
\label{PatternsToMissingEdge}
\end{figure}

Applying the same technique for all edges $e \in E$ taken as edge $uv$ means that we have $m=|E|$ convex combinations, which we will refer to as $\mathbb{M}_{e}$ for each $e\in E$. Note that for any edge $f\in E$, $f$ occurs $\frac{2}{5}$ in $\mathbb{M}_{f}$, $f$ occurs $\frac{9}{10}$ in $\mathbb{M}_{e}$ for each of the four edges $e$ adjacent to $f$, and $f$ occurs $\frac{4}{5}$ in the rest of the convex combinations $\mathbb{M}_{e}$. We now take a convex combination of the $m$ convex combinations $\mathbb{M}_{e}$, $e\in E$, by multiplying every multiplier $\lambda_{i}$ used in these convex combinations by $\frac{1}{m}$. Summing the occurrence of every edge in this new convex combination gives \[\frac{1}{m}(\frac{2}{5} + \frac{9}{10}(4) + \frac{4}{5}(m-5)) = \frac{4}{5}\text{.}\]Therefore, we have a convex combination for $y^{*}$ for $G$ and $P(G)$ holds true, contradiction.

\begin{case} \label{caseTwo}
$G$ has a proper 3-edge cut.
\end{case}
Notice that the ends of the three edges must be distinct since $G$ is 3-edge connected. In this case we contract each shore of the cut to a single vertex, to obtain graphs $G_{1}=(V_{1}, E_{1})$ with pseudo-vertex $v_{1}$ and $G_{2}=(V_{2}, E_{2})$ with pseudo-vertex $v_{2}$ (as shown in Figure\,\ref{3ECSplitting}). Both $G_{1}$ and $G_{2}$ are smaller than $G$, $|V_{1}|\ge 4$ and $|V_{2}|\ge 4$, so $P(G_{1})$ and $P(G_{2})$ hold. Moreover the patterns formed by the occurrence of the edges incident to $v_{1}$ and $v_{2}$ are unique and identical in the subgraphs in the corresponding convex combinations. For instance, exactly $\frac{1}{5}$ of the time, one of the incident edges will not be in the subgraph, on both sides of the cut, and this is true for each of the three incident edges. The remaining subgraphs contain all three incident edges. These constant patterns allow us to ``glue'' (reconnect the edges as there were before the inductive step) the subgraphs for $G_{1}$ and $G_{2}$ together, in such a way that identical patterns at $v_{1}$ and $v_{2}$ are matched. This results in a convex combination for $y^{*}$ that shows $P(G)$ holds, which gives a contradiction.
\end{proof}

\begin{figure}
\begin{center}
\includegraphics{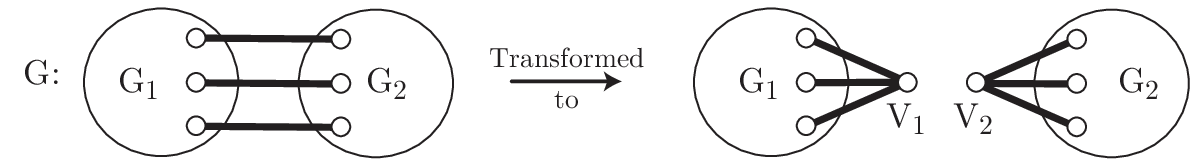}
\end{center}
\caption{Contracting both sides of a proper 3-edge cut of $G$.}
\label{3ECSplitting}
\end{figure}

We now use Lemma \ref{thmThreeSeventh} to obtain our main result below. We call a graph \mbox{$G=(V, E)$} a \emph{half-triangle graph} if $G$ is the support graph of a half-triangle solution $x^{*}$. If all 1-paths in $G$ consist of a single edge, we call $G$ \emph{simple}.

\begin{definition}\label{defThm}
$Q(G, p) \Leftrightarrow $ Given a simple half-triangle graph $G=(V, E)$ and a specified 1-edge $p\in E$, the vector $z^{*}\in \Real^{E}$ defined by
\[
 z^{*}_{e} =
  \begin{cases}
   \frac{3}{5} & \text{if } e \text{ is a half-edge of } G \text{,} \\
   \frac{4}{5} & \text{if } e = p \text{,} \\
   \frac{6}{5} & \text{otherwise,}
  \end{cases}
\]is a convex combination in which none of the 2-edge connected spanning multi-subgraphs use more than one copy of a half-edge or the edge $p$, and all of them use either one or two copies of a 1-edge.
\end{definition}

\begin{theorem}\label{mainThm}
$Q(G, p)$ holds for all simple half-triangle graphs $G=(V, E)$ and any 1-edge $p\in E$ not in a 2-edge cut in $G$.
\end{theorem}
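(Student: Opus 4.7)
The plan is to proceed by induction on the size of $G$, with the inductive step handling 2-edge cuts by decomposition and the base case reducing to the cubic-graph result of Lemma~\ref{thmThreeSeventh} via contraction of half-triangles.

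First I observe that, since each vertex is degree-tight with half-edges taking value $\frac{1}{2}$ and 1-edges taking value $1$, any 2-edge cut of $G$ must consist of exactly two 1-edges: a half-edge appearing in such a cut would also force the two other edges of the triangle vertex on its $S$-side into the cut, making it larger than $2$. So any 2-edge cut of $G$ is disjoint from $p$ by hypothesis. If a 2-edge cut exists, I split $G$ along it into two smaller simple half-triangle graphs $G_{1}, G_{2}$ (with $p \in E(G_{1})$), invoke the inductive hypothesis on each piece using $p$ for $G_{1}$ and any convenient 1-edge for $G_{2}$, and glue the two resulting convex combinations along the constant local pattern on the two cut edges, in the spirit of Case~2 of Lemma~\ref{thmThreeSeventh}. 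I may therefore assume $G$ has no 2-edge cut, in which case the cubic graph $G'$ obtained by contracting each half-triangle of $G$ to a pseudo-vertex is 3-edge connected.

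By Lemma~\ref{thmThreeSeventh}, $\frac{4}{5}$ times the all-ones vector on $E(G')$ is a convex combination $\sum_{i} \lambda_{i}\, \chi^{E(H'_{i})}$ of 2-edge connected spanning subgraphs $H'_{i}$ of $G'$, each using every edge at most once. Since $G'$ is cubic and every $H'_{i}$ is 2-edge connected, each pseudo-vertex has degree $2$ or $3$ in each $H'_{i}$; aggregated across $i$, a given pseudo-vertex is degree-$3$ with total weight $\frac{2}{5}$ and degree-$2$ (exactly one incident edge missing) with total weight $\frac{3}{5}$, and each of its three incident edges is the missing one with weight $\frac{1}{5}$. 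I construct each $H_{i}$ from $H'_{i}$ by expanding every pseudo-vertex back into its half-triangle according to the following local rules. At a half-triangle not containing $p$: if the pseudo-vertex has degree $3$ in $H'_{i}$, include all three half-edges and keep each external 1-edge at multiplicity one; if it has degree $2$ with external $e$ missing, include only the half-edge opposite to $e$ and use two copies of $e$ in $H_{i}$. At the $p$-half-triangle, with $p$ incident to vertex $u_{1}$: if the pseudo-vertex has degree $3$, use the two half-edges incident to $u_{2}$ with conditional weight $\frac{1}{2}$, or the two half-edges incident to $u_{3}$ with conditional weight $\frac{1}{2}$; if it has degree $2$ with $p$ missing, include all three half-edges and keep $p$ at multiplicity zero; if it has degree $2$ with a non-$p$ external $e$ missing, double $e$ and include only the opposite half-edge.

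A direct cut check inside each expanded half-triangle verifies that every $H_{i}$ is 2-edge connected: the doubled external supplies two units of capacity whenever the missing external is not $p$, and the two half-edges forced at $u_{1}$ when $p$ is missing supply it otherwise. The edge values then aggregate cleanly: each non-$p$ 1-edge realizes $\frac{4}{5}\cdot 1 + \frac{1}{5}\cdot 2 = \frac{6}{5}$; the edge $p$ realizes $\frac{4}{5}\cdot 1 + \frac{1}{5}\cdot 0 = \frac{4}{5}$; and each half-edge lands on exactly $\frac{3}{5}$, combining its degree-$3$ contribution with the unique degree-$2$ subcase in which it is selected. The main obstacle is the $p$-half-triangle: because we cannot double $p$, the degree-$2$ case with $p$ missing forces both half-edges at $u_{1}$ into $H_{i}$, which would overshoot $\frac{3}{5}$ under any symmetric degree-$3$ rule. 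The asymmetric ``incident to $u_{2}$ or incident to $u_{3}$'' prescription at the $p$-triangle is calibrated precisely to absorb this excess so that all three half-edges simultaneously realize $\frac{3}{5}$. A secondary technical point is to rigorously execute the 2-edge-cut induction for simple half-triangle graphs, verifying that the resulting pieces are genuine half-triangle graphs and that the gluing preserves every prescribed multiplicity.
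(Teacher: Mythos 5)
Your overall strategy is the same as the paper's: reduce the 2-edge-cut case by splitting and gluing, contract half-triangles to get a cubic 3-edge-connected graph, invoke Lemma~\ref{thmThreeSeventh}, and expand each pseudo-vertex by local half-edge patterns calibrated so that half-edges hit $\frac{3}{5}$, $p$ hits $\frac{4}{5}$, and the other 1-edges hit $\frac{6}{5}$. Your local patterns at non-$p$ triangles are in fact slightly cleaner than the paper's (they land on $\frac{3}{5}$ exactly, with no padding step), and your asymmetric degree-3 rule at the $p$-triangle matches the paper's. But there is a genuine inconsistency at the heart of the construction: $p$ is a 1-edge, so it is incident to \emph{two} half-triangles, and you state the special rule only for ``the $p$-half-triangle'' at a single vertex $u_1$. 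At the triangle on the other end of $p$, your generic rule for a degree-2 pseudo-vertex with external $e$ missing says to \emph{double} $e$ --- which, when $e=p$, doubles $p$ and destroys both the occurrence $\frac{4}{5}$ and the ``at most one copy of $p$'' requirement. The paper applies the special treatment to \emph{any} triangle incident with $p$; you must do the same at both endpoints (and one can check your asymmetric rule still yields $\frac{3}{5}$ for all three half-edges there), but as written your rules contradict each other at that triangle.

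Two further gaps. First, the reduction to Lemma~\ref{thmThreeSeventh} breaks when $G$ has exactly two half-triangles: the contracted graph $G'$ then has only two vertices (a triple edge), which is outside the lemma's hypothesis $|V|\ge 4$, so this case needs a separate explicit convex combination (the paper gives one, using a doubled 1-edge in two of the subgraphs). Second, the 2-edge-cut induction, which you defer as a ``secondary technical point,'' actually contains a key idea you have not supplied: the two shores $G_1$, $G_2$ are not half-triangle graphs until you close each with a new 1-edge ($hj$ and $ik$), and the gluing only works because you make the closing edge of the $p$-free shore its \emph{special} edge (so it is omitted $\frac{1}{5}$ of the time and never doubled) while the closing edge of the other shore is an ordinary 1-edge (doubled $\frac{1}{5}$ of the time); matching omitted patterns with doubled patterns is exactly what makes the cut edges come out to $\frac{6}{5}$ and preserves 2-edge connectivity. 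You also need to choose the cut so that $G_1+hj$ is 3-edge connected, so that $hj$ is a legitimate special edge for the inductive call. Without these specifics the inductive step does not go through as stated.
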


\begin{proof}
\setcounter {case} {0}

\begin{case}\label{case1}
$G$ has no 2-edge cut.
\end{case}
\begin{figure}
\begin{center}
\includegraphics{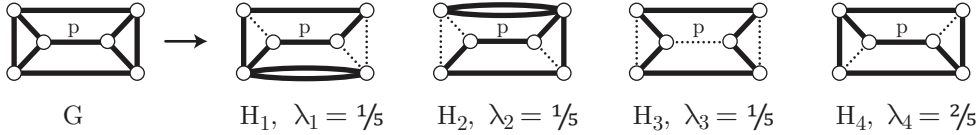}
\end{center}
\caption{Convex combination for $Q(G, p)$ when $G$ has only two triangles.}
\label{SpecialCase}
\end{figure}
If $G$ has only two half-triangles, then $Q(G, p)$ can be shown directly, using the $H_{i}$ and $\lambda_{i}$ shown in Figure \ref{SpecialCase}, where edges represented by dotted lines are omitted, and $H_{1}$ and $H_{2}$ contain a multi-edge. Otherwise, let \mbox{$G'=(V', E')$} be the graph obtained from $G$ by shrinking each half-triangle to a pseudo-vertex. Graph $G'$ is cubic and 3-edge connected and has $|V'|\geq 4$, therefore by Lemma \ref{thmThreeSeventh}, $P(G')$ holds, and yields a convex combination for $G'$ with an edge occurrence of $\frac{4}{5}$ for all edges. Let the subgraphs in this convex combination be $H_{i}', i=1, 2, \ldots, k$ with multipliers $\lambda_{1}', \lambda_{2}', \ldots, \lambda_{k}'$.

For each subgraph $H_{i}'$ in the convex combination for $G'$, the half-triangles (previously contracted to pseudo-vertices) will now be expanded to conclude the proof. We will add 1-edges and half-edges to each expanded $H_{i}'$ in such a way that we create a convex combination for the original half-triangle graph $G$ that gives the required occurrence for each edge for the theorem. To accomplish this, for each triangle $T$ in each subgraph, we will add half-edges in patterns, where each pattern is used a fraction of the time (either $\frac{1}{2}$, or $\frac{1}{3}$). To facilitate this, we simply assume that we start with a new convex combination of $G'$ which contains six copies of each subgraph $H_{i}'$, where each copy has a coefficient of $\frac{\lambda_{i}'}{6}$. 

Now consider any triangle $T$ in $G$ and let its incident edges be $x$, $y$ and $z$. In the convex combination created for $G'$, we have all three of these edges, or just two of these edges occur in each subgraph $H_{i}'$. Let 
\begin{equation*}
\begin{split}
&\lambda_{xyz}'=\sum(\lambda_{i}':\{x, y, z\}\in H_{i}') \text{,} \\
&\lambda_{xy}'=\sum(\lambda_{i}':\{x, y\}\in H_{i}') \text{,} \\
&\lambda_{yz}'=\sum(\lambda_{i}':\{y, z\}\in H_{i}') \text{ and} \\
&\lambda_{xz}'=\sum(\lambda_{i}':\{x, z\}\in H_{i}') \text{.}
\end{split}
\end{equation*}
Note that $\lambda_{xyz}' + \lambda_{xy}' + \lambda_{yz}' + \lambda_{xz}' = 1$, and each of the 1-edges $x$, $y$ and $z$ occur $\frac{4}{5}$ of the time, thus each is missing exactly $\frac{1}{5}$ of the time. Thus
\begin{equation} \label{newLambdas}
\lambda_{xyz}'=\frac{2}{5} \text{ and }\lambda_{xy}' = \lambda_{yz}' = \lambda_{xz}' = \frac{1}{5}\text{.}
\end{equation}

First we consider any expanded triangle $T$ which is not incident with edge $p$. For each subgraph $H_{i}'$ in which all three edges $x$, $y$ and $z$ occur, we include two of the three edges in $T$ $\frac{1}{3}\lambda_{xyz}'$ of the time. These patterns and their corresponding occurrences are illustrated in Figure\,\ref{ExpandingPseudoVerticesNoEdgeMissing}, and result in an occurrence of $\frac{2}{3}\lambda_{xyz}'=\frac{4}{15}$ for each edge of $T$ overall, by (\ref{newLambdas}). Note that using each pattern one third of the time can be accomplished by using the patterns of Figure\,\ref{ExpandingPseudoVerticesNoEdgeMissing} for $T$ for two of the six copies of each $H_{i}'$ where $x$, $y$ and $z$ occur. Then for each subgraph $H_{i}'$ in which $z$ is omitted and $x$ and $y$ occur, we consider both triangle $T$ and  the other triangle $T'$ incident with $z$. In this case we include the edges in $T$ incident with $z$ $\frac{1}{2}\lambda_{xy}'$ of the time, and the other edge in $T$ $\frac{1}{2}\lambda_{xy}'$ of the time, and do the opposite in triangle $T'$. In all cases we also include two copies of edge $z$. The patterns are illustrated in Figure\,\ref{ExpandingPseudoVerticesOneEdgeMissing} and result in an occurrence of $\frac{1}{2}\lambda_{xy}'$ for each edge in $T$. Note that using each pattern half of the time can be accomplished by using each of the two patterns shown in Figure\,\ref{ExpandingPseudoVerticesOneEdgeMissing} for $T$ (and $T'$) in three of the six copies of each $H_{i}'$ in which $z$ is omitted. We do the same for the cases where $x$ or $y$ are omitted in $H_{i}'$. The total occurrence of each half-edge in $T$ is 
\begin{equation*}
\begin{split}
&\frac{2}{3}\lambda_{xyz}' + \frac{1}{2}\lambda_{xy}' + \frac{1}{2}\lambda_{yz}' + \frac{1}{2}\lambda_{xz}' \text{,}
\end{split}
\end{equation*}
which by (\ref{newLambdas}) is $\frac{17}{30} < \frac{3}{5}$. We can arbitrarily add back half-edges in the convex combinations to obtain an occurrence of exactly $\frac{3}{5}$ for these edges (for a complete illustration of the operations and the pattern occurrences, see Figure\,\ref{ExpandingPseudoVertices}).

\begin{figure}
\begin{center}
\includegraphics{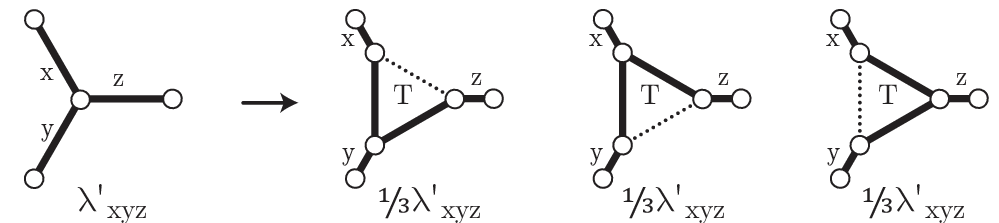}
\end{center}
\caption{Patterns used for triangle expansion for subgraphs containing $x$, $y$ and $z$.}
\label{ExpandingPseudoVerticesNoEdgeMissing}
\end{figure}

\begin{figure}
\begin{center}
\includegraphics{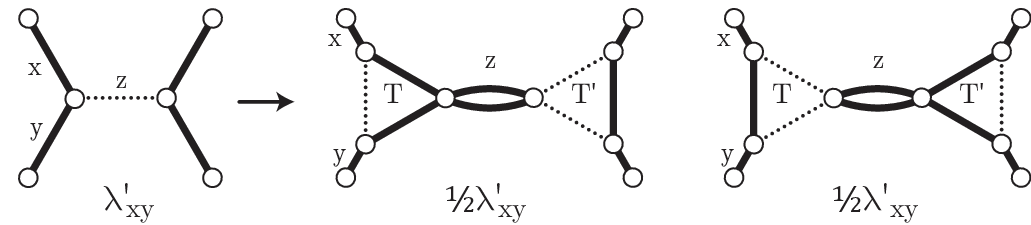}
\end{center}
\caption{Patterns used for triangle expansion for an omitted edge $z$.}
\label{ExpandingPseudoVerticesOneEdgeMissing}
\end{figure}

Note that each 1-edge which is not $p$ is now doubled whenever it was previously omitted, and thus occurs $\frac{6}{5}$ of the time. Also note that all patterns used in the expansion of the half-triangles ensure that the new multi-subgraphs created from the subgraphs $H_{i}'$ for $G'$ are also 2-edge connected and spanning in $G$, as required. 

\begin{figure}
\begin{center}
\includegraphics{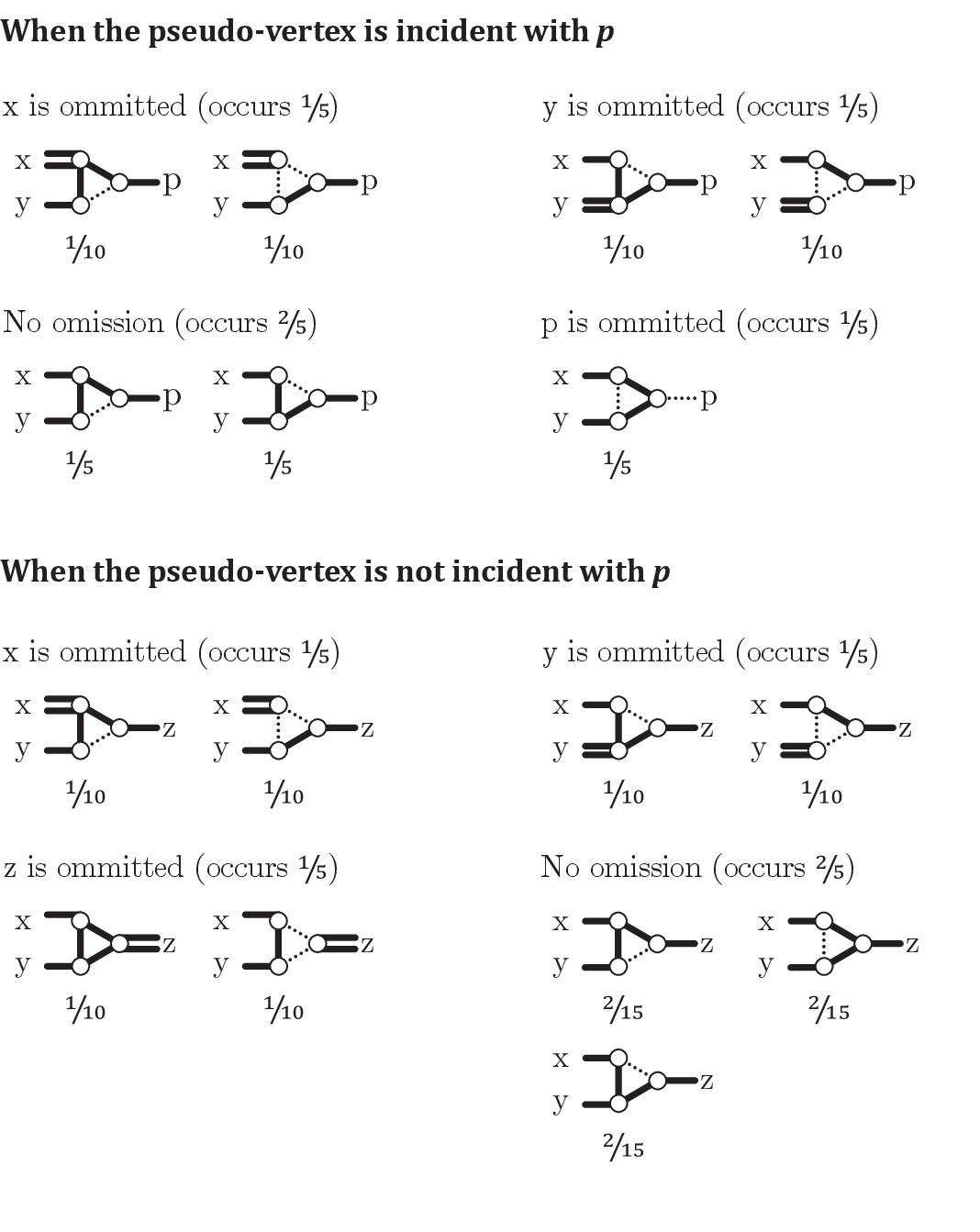}
\end{center}
\caption{Examples of edge selection upon expanding the pseudo-vertices of $G'$.}
\label{ExpandingPseudoVertices}
\end{figure}

Next we consider any expanded triangle $T$ which is incident with edge $p$, and WLOG let $p=z$. For each subgraph $H_{i}$ in which all three edges $x$, $y$ and $p$  occur, we include two of the three edges in $T$ $\frac{1}{2}\lambda_{xyp}'=\frac{1}{5}$ of the time, using the two patterns illustrated in Figure\,\ref{ExpandingPseudoVerticesPNotMissing} (so we use each pattern in three of the six copies of $H_{i}'$). Then for each subgraph $H_{i}'$ in which $p$ is omitted and $x$ and $y$ occur we include the two edges of $T$ incident with $p$ in any of these operations. Recall this occurs $\lambda_{xy}'=\frac{1}{5}$ of the time, by (\ref{newLambdas}). Note that we do not double edge $p$. The total occurrence of each edge of $T$ is exactly $\frac{3}{5}$, and $p$ occurs exactly $\frac{4}{5}$ of the time (see Figure \ref{ExpandingPseudoVertices} for a complete illustration of these operations and the pattern occurrences).

\begin{figure}
\begin{center}
\includegraphics{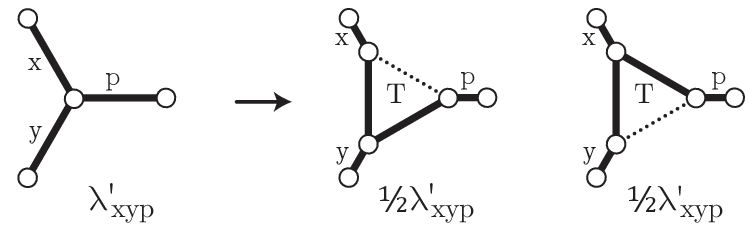}
\end{center}
\caption{Patterns used for triangle expansion for subgraphs containing $x$, $y$ and $p$.}
\label{ExpandingPseudoVerticesPNotMissing}
\end{figure}

We now have, over all cases, the half-edge occurrence is $\frac{3}{5}$, $p$ occurs $\frac{4}{5}$ of the time, and the occurrence of the other 1-edges is $\frac{6}{5}$. Furthermore, none of the 2-edge connected spanning multi-subgraphs use more than one copy of a half-edge or the edge $p$, and all of them use either one or two copies of a 1-edge. Thus $Q(G, p)$ holds.

\begin{case}
$G$ has a 2-edge cut $C=\{hi, jk\}$.
\end{case}
Suppose the contrary, and let $G$ be the smallest counter-example for which $Q(G, p)$ does not hold. Let $G_{1}$, $G_{2}$ be the two sides of the cut $C$ in $G$, with $h$ and $j$ in $G_{1}$ and $i$ and $k$ in $G_{2}$, and WLOG choose $C$ such that $G_{1}+hj$ is 3-edge connected and does not contain $p$. By smaller example and Case\,\ref{case1}, $Q(G_{1}+hj, hj)$ and $Q(G_{2}+ik, p)$ hold. We now ``glue'' together in the obvious way, the subgraphs in the convex combination for $G_{1}+hj$ where $hj$ is omitted with the subgraphs in the convex combination for $G_{2}+ik$ which have $ik$ doubled (both patterns occur\,$\frac{1}{5}$ of the time) by removing the double edge $ik$ and adding two copies of edges $hi$ and $jk$. Similarly, we glue the subgraphs for $G_{1}+hj$ and $G_{2}+ik$ where $hj$ and $ik$ occur as single edges in the subgraphs (both patterns occur $\frac{4}{5}$ of the time) by removing $hi$ and $ik$ and adding edges $hi$ and $jk$. We obtain $Q(G, p)$, contradiction.
\end{proof}

By replacing 1-edges by 1-paths in the convex combinations for $Q(G, p)$, and doubling the path for $p$ wherever $p$ was omitted, we can obtain $\frac{6}{5}x^{*}$ as a convex combination for any half-triangle solution $x^{*}$, i.e. there exist \mbox{2-edge} connected spanning multi-subgraphs $H_{i}$ with multipliers $\lambda_{i}\in \Real_{\geq 0}$, $i=1, 2, \dotsc, j$ such that $\sum_{i=1}^{j}\lambda_{i}=1$ and 
\begin{equation} \label{costEqn}
\frac{6}{5}x^{*} = \sum_{i=1}^{j}\lambda_{i}\chi^{E(H_{i})} \text{.}
\end{equation}
Now consider any non-negative cost vector $c\in \Real^{E}$ which is optimized at $x^{*}$ for \TwoECLP{}, i.e. $cx^{*} = \text{OPT}(\TwoECLP{})$. By multiplying both sides of (\ref{costEqn}) by $c$, we obtain \[\frac{6}{5}\text{OPT}(\TwoECLP)=\sum_{i=1}^{j}\lambda_{i}c\chi^{E(H_{i})}\] and thus, for at least one subgraph $H_{i}$ in the convex combination,
\begin{equation} \label{resultEqn}
c\chi^{E(H_{i})} \leq\frac{6}{5}\text{OPT}(\TwoECLP{}) \text{.}
\end{equation}
Since $\text{OPT}(\TwoEC{})\leq c\chi^{E(H_{i})}$ and $cx^{*}=\text{OPT}(\TwoECLP)$, it follows that $\frac{\text{OPT}(\TwoEC)}{\text{OPT}(\TwoECLP)}\le \frac{6}{5}$ for such cost functions. As there exists a family of half-triangle solutions which show $\alphaTwoEC \geq \frac{6}{5}$ asymptotically\,\cite{alexander}, we obtain the following corollary to Theorem \ref{mainThm}.

\begin{corollary}
The integrality gap $\alphaTwoEC=\frac{6}{5}$ when restricted to cost functions optimized at half-triangle solutions.
\end{corollary}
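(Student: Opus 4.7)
The plan is to induct on $|V(G)|$ and case-split on whether $G$ contains a 2-edge cut. In the principal case (no 2-edge cut), first handle the degenerate situation of exactly two half-triangles by writing down an explicit convex combination by hand. Otherwise, shrink each half-triangle to a pseudo-vertex to obtain a cubic 3-edge connected graph $G'$ with $|V(G')|\ge 4$, and invoke Lemma~\ref{thmThreeSeventh} to obtain a convex combination of 2-edge connected spanning subgraphs $H_i'$ of $G'$ with coefficients $\lambda_i'$ in which every edge occurs $\frac{4}{5}$ of the time. The goal is to expand each pseudo-vertex back to a half-triangle in each $H_i'$ by adding half-edges in a carefully controlled way, so that the resulting multi-subgraphs form a valid convex combination realizing $z^*$ on $G$.

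The first technical step is to extract the distribution of patterns at each pseudo-vertex. For a pseudo-vertex with incident 1-edges $x,y,z$, every 2-edge connected spanning subgraph of $G'$ must contain at least two of $\{x,y,z\}$, so the four possible patterns (all three present, or exactly one omitted) partition the weight. Using that each of $x,y,z$ has occurrence $\frac{4}{5}$, a quick linear system forces ``all three present'' to occur $\frac{2}{5}$ of the time and each ``one omitted'' pattern to occur $\frac{1}{5}$ of the time. I would then design the expansion as follows. For a triangle $T$ not incident to $p$: in the subgraphs where all three 1-edges are present, split the weight evenly among the three patterns that each choose two of the three half-edges of $T$; in the subgraphs where a 1-edge is omitted, double that 1-edge and include the two half-edges of $T$ incident to its free endpoint, splitting the third half-edge evenly between the two remaining choices. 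For the triangle incident to $p$ (say $p=z$), modify the rule so that $p$ is never doubled: when $p$ is omitted, include both half-edges at the endpoint of $p$ and rely on the rest of $G$ (which remains connected across the $p$-cut because $p$ is not in a 2-edge cut) to carry connectivity. A direct arithmetic check then gives each half-edge the occurrence $\frac{17}{30}$, each 1-edge $\ne p$ the occurrence $\frac{6}{5}$, and $p$ the occurrence $\frac{4}{5}$; the deficit on the half-edges from $\frac{17}{30}$ to $\frac{3}{5}$ is absorbed by arbitrarily adding half-edges to subgraphs where they are currently absent, which is permissible because the definition of $Q(G,p)$ only forbids using more than one copy of a half-edge.

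For the inductive case where $G$ has a 2-edge cut $C=\{hi,jk\}$, choose $C$ so that one side $G_1$ does not contain $p$ and $G_1+hj$ is 3-edge connected; this is possible by iteratively refining along further 2-edge cuts until the chosen side is ``tight'' around $p$. Both $G_1+hj$ and $G_2+ik$ are then simple half-triangle graphs smaller than $G$, and by induction $Q(G_1+hj,hj)$ and $Q(G_2+ik,p)$ hold. By construction $hj$ and $ik$ each appear singly $\frac{4}{5}$ of the time and are omitted $\frac{1}{5}$ of the time in their respective convex combinations. Glue by pattern matching: combine ``$hj$ omitted'' subgraphs of $G_1+hj$ with ``$ik$ doubled'' subgraphs of $G_2+ik$ and replace the doubled $ik$ by two copies each of $hi$ and $jk$; combine ``$hj$ single'' subgraphs with ``$ik$ single'' subgraphs and replace $hj,ik$ by single copies of $hi,jk$. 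Both cut edges then get occurrence $\frac{6}{5}$, matching $z^*$, and connectivity follows because the glued pieces were each 2-edge connected on their respective sides.

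The main obstacle is Case~1: choosing the triangle-expansion patterns so that every resulting multi-subgraph is simultaneously 2-edge connected and spanning in $G$, that the exact target edge occurrences are hit, and that the special rule around $p$ (never doubled, and never appearing with more than one copy) can be enforced without losing connectivity at the cut separating the two endpoints of $p$. The delicate verification is that in each subcase where a 1-edge is missing, the two adjacent half-edges in the triangle provide the required alternate path, and that in the $p$-incident triangle the absence of doubling does not create a 2-edge cut in the expanded multi-subgraph.
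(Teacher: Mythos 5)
There is a genuine gap here, but it is one of scope rather than of technique: what you have written is a proof (quite faithful to the paper's own) of Theorem~\ref{mainThm}, i.e.\ of the statement $Q(G,p)$ about convex combinations on simple half-triangle graphs. The corollary is a different statement --- it is about the integrality gap for cost functions --- and your proposal never makes the transition from the combinatorial statement to the cost statement. Three pieces are missing. First, $Q(G,p)$ is only proved for \emph{simple} half-triangle graphs, where every 1-path is a single edge; a general half-triangle solution $x^{*}$ may have long 1-paths, and one must replace each 1-edge of the convex combination by the corresponding 1-path (doubling the whole path for $p$ wherever $p$ was omitted) to conclude that $\frac{6}{5}x^{*}$ itself is a convex combination of incidence vectors of 2-edge connected spanning multi-subgraphs of $K_n$. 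Second, the averaging argument is absent: given $\frac{6}{5}x^{*}=\sum_{i}\lambda_{i}\chi^{E(H_{i})}$ and a cost function $c\ge 0$ optimized at $x^{*}$, multiplying by $c$ gives $\frac{6}{5}\,\mathrm{OPT}(\TwoECLP)=\sum_{i}\lambda_{i}c\chi^{E(H_{i})}$, so some $H_{i}$ satisfies $c\chi^{E(H_{i})}\le\frac{6}{5}\,\mathrm{OPT}(\TwoECLP)$, whence $\mathrm{OPT}(\TwoEC)/\mathrm{OPT}(\TwoECLP)\le\frac{6}{5}$. Third, the corollary asserts \emph{equality}, which requires citing the known family of half-triangle instances whose ratio tends to $\frac{6}{5}$; without that lower bound you can only conclude $\alphaTwoEC\le\frac{6}{5}$ on this class of cost functions.

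To be clear about what you did get right: the induction, the case split on 2-edge cuts, the reduction to the cubic 3-edge connected graph via Lemma~\ref{thmThreeSeventh}, the $\bigl(\frac{2}{5},\frac{1}{5},\frac{1}{5},\frac{1}{5}\bigr)$ pattern distribution at each pseudo-vertex, the $\frac{17}{30}$ arithmetic with the add-back to $\frac{3}{5}$, the special treatment of the triangle containing $p$, and the gluing along a 2-edge cut all match the paper's proof of Theorem~\ref{mainThm}. But as a proof of the corollary as stated, the argument stops one (easy but essential) step short of the finish line on the upper bound, and omits the lower bound entirely.
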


\end{document}